\newtheorem{observation}{Observation}
\newcommand{\A}{K}
\newcommand{\B}{L}
\newcommand{\C}{H}
\newcommand{\X}{S}
\newcommand{\Rxn}{\mathcal{N}}
\newcommand{\Met}{\mathcal{R}}
\newcommand{\R}{\mathbb{R}}
\newcommand{\rank}{\mathrm{rank}}
\newcommand{\conv}{\mathrm{conv}}
\newcommand{\pr}{\mathrm{pr}}
\newcommand{\supp}{\mathrm{supp}}
\newcommand{\Mod}{\mathfrak{B}}
\newcommand{\cNP}{\ensuremath{\mathbf{NP}}}
\newcommand{\eps}{\varepsilon}
\newcommand\blfootnote[1]{%
  \begingroup
  \renewcommand\thefootnote{}\footnote{#1}%
  \addtocounter{footnote}{-1}%
  \endgroup
}
\newcommand{\problemtag}[1]{%
  (\textsc{#1})%
  \def\@currentlabel{\textsc{#1}}%
  \phantomsection%
}
\newcommand{\citep}{\cite}
\title{Polynomial time vertex enumeration of convex polytopes of bounded branch-width}
\author{Arne C. Reimers$^{1,*}$, Leen Stougie$^{1,2,3}$}
\begin{document}

\parindent 0em
\parskip 0.5em

\maketitle

\blfootnote{\hspace{-0.53cm}
$^1$ CWI, Science Park 123, 1098 XG Amsterdam, The Netherlands \\
$^2$ Operations Research, VU University, De Boelelaan 1085, 1081 HV, Amsterdam, The Netherlands \\
$^3$ ERABLE team, INRIA Grenoble Rh\^one-Alpes, France \\
$^*$ Arne Reimers works now for Google Germany}

\begin{abstract}
\noindent Over the last years the vertex enumeration problem of polyhedra has seen a revival in the study of metabolic networks, which increased the demand for efficient vertex enumeration algorithms for high-dimensional polyhedra given by inequalities.
It is a famous and long standing open question in polyhedral theory and computational geometry whether the vertices of a polytope (bounded polyhedron), described by a set of linear constraints, can be enumerated in total polynomial time.
In this paper we apply the concept of branch-decomposition to the vertex enumeration problem of polyhedra $P = \{x : Ax =  b, x \geq 0\}$.
For this purpose, we introduce the concept of $k$-module and show how it relates to the separators of the linear matroid generated by the columns of $A$. We then use this to present a total polynomial time algorithm for polytopes $P$ for which the branch-width of the linear matroid generated by $A$ is bounded by a constant $k$.



%
\end{abstract}

\bigskip

\noindent
\quad \quad \ \ Keywords: Polytopes, Vertex Enumeration Complexity, Matroid Separation, Branch-width \\

\section{Introduction}

A famous and long standing open question in polyhedral theory and computational geometry is whether the vertices of a polytope (bounded polyhedron), described by a set of linear constraints, can be enumerated in total polynomial time Avis and Fukuda~\cite{avis1992pivoting}, Boros et al.~\cite{boros09}, Khachiyan et al.~\cite{khachiyan08}, i.e. in a running time bounded by a polynomial function of input and output size.
For non-degenerate polyhedra pivoting based algorithms exist that run in \emph{total polynomial time} Dyer~\cite{dyer83}, Avis and Fukuda~\cite{avis1992pivoting}. Moreover, it is not possible to enumerate the vertices of general (unbounded) polyhedra in total polynomial time unless $\mathbf{P}=\mathbf{NP}$ Khachiyan et al.~\cite{khachiyan08}. But the existence of a total polynomial time vertex enumeration algorithm for polytopes remains unresolved.

A revival in significance of this question has come along with the advent of genome scale metabolic networks in the field of computational systems biology, see e.g. Schuster and Hilgetag~\cite{schuster94}, Gagneur and Klamt~\cite{gagneur04}, Acu\~na et al.~\cite{acuna12}. Vertex enumeration and variants thereof became an important biological analysis tool and several double description, Motzkin et al.~\cite{motzkin53}, based methods have been designed for the purpose, Terzer and Stelling~\cite{terzer2008large}, Hunt et al.~\cite{hunt14}.

Our theoretical approach of the vertex enumeration problem is of a fundamentally different nature than used in previous types of methods. We take a {\em branch decomposition} based perspective, very much in the spirit of research on the more famous notion of tree-width of graphs, Halin~\cite{halin76}, Robertson and Seymour~\cite{robertson84}, Hlin$\check{\rm e}$n\'y and Whittle~\cite{hlineny06}, which often leads to polynomial time solvability of graph optimization problems that are NP-hard in general.
{\em Branch-width}, defined in Robertson and Seymour~\cite{robertson91}, Hicks and Oum~\cite{hicks11}, is applicable to {\em matroids} Hicks and Oum~\cite{hicks11}.
On its turn, for polyhedra of the form $P = \{x\in \R^n : Ax = b, x \geq 0\}$,
we define the branch-width in terms of the linear matroid defined
by the linear independence relationship between subsets of the columns of the coefficient matrix
$A$ of $P$. The branch decomposition of the set of columns of $A$ leads to the notion of $k$-module for subsets of columns of $A$, which corresponds directly to $k$-separation in matroid theory Oxley~\cite{oxley11}.

The notion of $k$-module extends the concept of {\em flux module}, previously introduced for specific enumeration problems on metabolic networks in M\"uller and Bockmayr~\cite{mueller13}, Reimers et al.~\cite{reimers15}.
It's usefulness becomes clear when we will consider vertices of a polyhedron not so much as vectors in $\mathbb{R}^n$ but as combinatorial objects, the set of basis columns that define them (see e.g. Bertsimas and Tsitsiklis~\cite{bertsimas97}).


The notion of branch-width we will employ has already been used by Cunningham and Geelen \cite{cunningham07} to show tractability of a class of integer programs. We show here its usefulness as a complexity measure for polytopes. In particular, we present a total polynomial time algorithm for vertex enumeration of polytopes if the branch-width of the underlying linear matroid is bounded by a constant $k$.

To the best of our knowledge this is the first significant result in the quest for the complexity of vertex enumeration of polytopes since Khachiyan et al. \cite{khachiyan08} proved its hardness for any (unbounded) polyhedra in 2008. Yet, it leaves the question open if vertex enumeration of polytopes can be done in total polynomial time.

The paper is organized as follows. In Section~\ref{subsec:terminology} we introduce the somewhat unconventional terminology which allows us to define faces and vertices of polytopes as combinatorial objects. In the same section we give the definition of branch decomposition and branch-width for matroids. In Section~\ref{subsec:k-modules} we introduce $k$-modules and characterize them through their correspondence to $k$-separation in matroid theory.
These ingredients are then used in Section~\ref{sec:totalP} to derive the central result of this paper: a total polynomial time vertex enumeration algorithm for polytopes of which the coefficient matrix has branch-width bounded by a constant. A discussion of our results and some open polyhedral questions form the final Section~\ref{sec:conclusion}.

\section{Preliminaries}
\label{subsec:terminology}
In this paper, we will concentrate on polyhedra of the form:
\begin{eqnarray}
\label{eq:defP}
P = \{x\in \R^\Rxn : Ax = b, x \geq 0\},
\end{eqnarray}
the so-called standard LP-formulation.
Let us first notice the somewhat unconventional notation. As will become clear, it is very useful to have named variables (i.e., named columns of $A$). We therefore do not consider the elements of $P$ as elements of $\R^n$, but as elements of $\R^\Rxn$, which is the set of functions $x : \Rxn \to \R$, where $\Rxn$ is the set of variable names. Our indexing notation is thus defined as $x_i := x(i)$. Furthermore, we index with subsets $\A \subseteq \Rxn$ and define $x_\A : \A \to \R$ with $x_\A(i) = x(i)$ for $i \in \A$.
This allows chained indexing, which does not work in general with classical position-based indexing: i.e., for $\A \subseteq \B \subseteq \Rxn$, $x \in \R^\Rxn$, $y = x_\B$, and $z = y_\A$ implies $z = x_\A$.

Similarly, the matrix $A$ is an element of $\R^{\Met \times \Rxn}$, where $\Met$ denotes the set of rows ($\Rxn$ is the set of columns). We use the same indexing notation as for the vectors to write
$A_\A$ for the sub-matrix consisting of the columns $\A$.
For our computational results we will always assume that inputs are rational numbers.

We notice here that the focus on the particular form of the polyhedra is not a real restriction, since it can always be obtained by adding slack variables and redefining the $x$ variables, as argued in Appendix \ref{appendix:formulation_polyhedron}.

The standard definition of a face of polyhedron $P$ is the non-empty intersection of $P$ with a hyperplane that has the property that $P$ lies entirely on one side of it (see e.g. Schrijver~\cite{Schrijver86}). We define here faces of $P$ in general and vertices of $P$ in particular as discrete objects by means of their support.
\begin{definition}[Support of a face]
 Let $Q \subseteq P$ be a face of $P$.
  The support of $Q$ is the set $\supp(Q) := \{i \in \Rxn : \exists x \in Q\;\ x_i > 0 \}.$
\end{definition}
Since any face $Q$ of $P$ must satisfy $Ax=b$, it is uniquely defined by the set of variables for which the non-negativity constraints are tight, and hence, alternatively, by its complement, i.e. the set of variables in the support of $Q$. This leads us to the combinatorial definition of a face.
\begin{definition}[Face]\label{def:face}
We call a set $F \subseteq \Rxn$ a face of $P$ if there exists a face $Q \subseteq P$ of $P$ with $\supp(Q) = F$.
\end{definition}
It leads to the following characterization:
\begin{observation}\label{obs:face}
 $F \subseteq \Rxn$ is a face of $P$ if and only if there exists a $x \in P$ with $\supp(x) = F$.
\end{observation}
This allows us to call a face $F \subseteq \Rxn$ with minimal support a vertex of $P$.
\begin{proposition} \label{prop:vertex_definition}
 $x \in P$ is a vertex of $P$ if and only if $F := \supp(x)$ is minimal.
\begin{proof}
 Immediate from the fact that $x \geq 0$ are the only inequality constraints of $P$.
\end{proof}
\end{proposition}
Our algorithm for enumerating the vertices of $P$ is based on a branch decomposition of the linear matroid represented by the columns of the coefficient matrix $A$. Branch decomposition and branch-width have been defined first in Robertson and Seymour~\cite{robertson91}. For independent reading we define these notions here for matroids. We assume that the definitions of (linear) matroid and rank are known (otherwise we refer the reader to e.g. Oxley~\cite{oxley11}), but we do define here the notion of $k$-separation, which is crucial in this paper.

Let $M$ be a matroid on a set $\Rxn$ of elements. For $\A\subset \Rxn$ let $\lambda(\A) := \rank(\A) + \rank(\Rxn \setminus \A) - \rank(\Rxn)$ be the so-called connectivity function of $M$.
\begin{definition}[$k$-separation] \label{def:k-separator}
$\A\subset \Rxn$ is a $k$-separation if $\lambda(\A) < k$.
\end{definition}

\begin{definition}[branch decomposition and branch-width] \hspace{1cm}
\begin{itemize}
\item A \emph{branch decomposition} of $M$ consists of a tree $T$ with nodes of degree $3$ and $1$ only. Its $|\Rxn|$ leaves are identified with the elements $\Rxn$ of $M$.
\item The \emph{width} of an edge $e$ of $T$ is $\lambda(\A_e)$, where $\A_e$ is the set of leaves leaves of $T$ (elements of $\Rxn$) of any of the two components obtained by deleting the edge $e$ from $T$. The width is well defined, since $\lambda(\A) = \lambda(\Rxn \setminus \A)$ for any $\A \subset \Rxn$.
\item The width of a branch decomposition is the maximum width of an edge $e \in T$.
\item The \emph{branch-width} of $M$ is the minimum width of all possible branch de\-com\-po\-si\-tions.
\end{itemize}
\end{definition}

Hence, the width of an edge $e$ of $T$ is $k$ if and only if $\A_e$ is a $k+1$-separation of the matroid $M$.

Using methods described by Bixby and Cunningham~\cite{bixby79,bixby90} we can find $k$-separations in polynomial time if $k$ is assumed fixed.
Algorithms that directly compute decompositions of matroids into separations have been studied in the context of branch-decompositions by Oum and Seymour~\cite{oum06,oum07}.

We conclude this section with some frequently used notation. The set of all vertices of $P$ is denoted by $\mathcal{V}$.
We use $\dot\cup$ to denote disjoint union,
$\subset$ to denote a proper subset,
and $\langle \cdot \rangle$ to denote the linear hull.

\section{k-Modules} \label{subsec:k-modules}
To understand what branch-width and $k$-separation mean in the context of polyhedra, we introduce the notion of $k$-modules and show that it is equivalent to $(k+1)$-separation if all variables exhibit variability (Theorem ~\ref{thm:lin_space_gen} and \ref{thm:separator_gen}). As we will argue soon, this is no restriction.

In this section we will use $\bar{P}$ for a subset of a polyhedron, because many of the following results are not restricted to polyhedra of the form of (\ref{eq:defP}).
\begin{definition}[$k$-module] \label{def:aff_k_mod}
Let $\bar{P} \subseteq \{x \in \R^\Rxn : Ax = b \}$.
$\A \subseteq \Rxn$ is a {\em $\bar{P}$-$k$-module} if there exists a $d \in \R^\Met$ and a $D \in \R^{\Met \times k}$ s.t.
\begin{align*}
\forall x \in \bar{P} \ \exists \alpha \in \R^k : A_\A x_\A = d + D\alpha.
\end{align*}
We call $d$ the {\em constant interface} vector of the module and $D$ the {\em variable interface} matrix.
If we can choose $d=0$ we will say that the $\bar{P}$-$k$-module is a {\em linear} $\bar{P}$-$k$-module.
\end{definition}
If the polyhedron of reference is clear from the context, we will simply write $k$-module instead of $\bar{P}$-$k$-module. We notice that the flux modules, defined in M\"uller and Bockmayr~\cite{mueller13} and mentioned in the introduction, are $0$-modules in the context of this definition, only having constant interface.

Without proof, we give some observations, which may help the reader to get some intuition for the notion.
\begin{observation} \label{obs:k-module}
Let $\bar P \subseteq \{x \in \R^\Rxn : Ax = b \}$ and $\A\subset \Rxn$.
\begin{itemize}
\item [(i)] $\A$ is a $k$-module if and only if $\dim \{A_\A x_\A : x \in \bar P\} \leq k$.
\item [(ii)] Every set $\A$ with $k$ elements is a (linear) $k$-module; in particular every $i \in \Rxn$ is a (linear) $1$-module;
\item [(iii)] $\A$ is a $(k-1)$-module $\Rightarrow$ $\A$ is a $k$-module;
\item [(iv)] Let $\B \subseteq \Rxn$ be a $0$-module. It holds for all $\A \subseteq \Rxn \setminus \B$ that $\A$ is a $k$-module if and only if $\A \dot\cup \B$ is a $k$-module;
\item [(v)] Let $\A$ be a set of variables without variability; i.e., have fixed values in $\bar P$. Then $\A$ is a $0$-module.
\end{itemize}
\end{observation}
The following proposition shows that for a given $k$-module $\A$ the variable interface $D$ is not unique but its span $\langle D \rangle := \{D \alpha : \alpha \in \R^k\}$ is, if $k$ is chosen to be minimal, in the context of (iii) above.
\begin{proposition}
 Let $\bar{P} \subseteq \R^\Rxn, \A \subseteq \Rxn$ and let $k$ be minimal s.t. $\A$ is a $k$-module of $\bar{P}$. Let $D, D'$ be two different variable interfaces of $\A$. Then $\langle D \rangle = \langle D' \rangle$ and the dimension of $\langle D \rangle$ is equal to $k$.
\begin{proof}
 Assume $\langle D \rangle \neq \langle D' \rangle$. By definition of variable interface, it follows that
\begin{eqnarray}
\label{eq:spnintersection}
\left.
\begin{array}{r}
\forall x \in \bar{P} : A_\A x_\A \in d+ \langle D \rangle \\
\forall x \in \bar{P} : A_\A x_\A \in d+ \langle D' \rangle \\
\end{array}
\right\}
\Rightarrow \forall x \in \bar{P} &: A_\A x_\A \in d+ \langle D \rangle \cap \langle D' \rangle.
\end{eqnarray}
Since $\langle D \rangle \neq \langle D' \rangle$, it follows that $\langle D \rangle \cap \langle D' \rangle \subset \langle D \rangle$. Hence,
$$
\dim(\langle D \rangle \cap \langle D' \rangle) < \dim(\langle D \rangle).
$$
It follows that there exists a $D'' \in \R^{\Met \times \ell}$ with $\ell < k$ and $\langle D'' \rangle = \langle D \rangle \cap \langle D' \rangle$. By \eqref{eq:spnintersection} it follows that $D''$ is a variable interface of $A$ and hence, $k$ was not minimal; a contradiction.
\end{proof}
\end{proposition}


Extending the case of flux modules in Reimers et al.~\cite{reimers15}, we will prove that for general $k$-modules we can restrict ourselves to the analysis of linear vector spaces. To facilitate the exposition, we will assume from here on that no $i\in \Rxn$ exists such that $x_i$ has fixed value in $\bar{P}$. We notice that this is not a severe restriction, which can always be met for convex polyhedral sets $\bar{P}$ by preprocessing using LP and replacing variables by their fixed values, whenever applicable (see Observation~\ref{obs:k-module}).

We use $\ker(A)$ to denote the kernel of $A$: $\ker(A)= \{x \in \R^\Rxn : Ax = 0\}$
\begin{theorem} \label{thm:lin_space_gen}
 Let $\bar{P} \subseteq \{x \in \R^\Rxn : Ax = b\}$. Then for all $\A \subseteq \Rxn$ we have
\begin{align*}
\A \mbox{ is a $k$-module of $\bar{P}$} &\Leftrightarrow \A \mbox{ is a $k$-module of $\ker(A)$}.
\end{align*}
\end{theorem}
Before proving the theorem, we observe that for any set that contains $0$, hence any linear vector space and in particular $\ker(A)$, this theorem implies that we can concentrate on linear $k$-modules: for
$0\in \ker(A)$ implies that for any $k$-module of $\ker(A)$ we can choose constant interface $d=0$.
\begin{proof}
 \begin{enumerate}
  \item[$\Leftarrow$:] Let $x^1 \in \bar{P}$ be arbitrary but fixed. We define $d = A_\A x_\A^1$. Let $x^2 \in \bar{P}$ be arbitrary. By definition of $\bar{P}$, we have $x^1 - x^2 \in \ker(A)$. Since, as argued, $\A$ is a linear $k$-module of $\ker(A)$, there exists an $\alpha \in \R^k$ such that $A_\A (x^2_\A - x^1_\A) = D \alpha$, where $D$ is the variable interface of $\A$.
Thus, $A_\A x^2_\A = A_\A x^1_\A + D \alpha = d + D \alpha$ and therefore $\A$ is a $k$-module of $\bar{P}$.
  \item[$\Rightarrow$:]
Let $\A$ be a $k$-module of $\bar{P}$. Let $d \in \R^k, D \in R^{\Met \times k}$ be the constant and fixed interfaces of $\A$ in $\bar{P}$. Take any $y\in \bar{P}$ with the property that there exists $\epsilon > 0$ s.t. $y+w \in P$ for all $w \in \ker(A)$ with $\Vert w \Vert_\infty < \eps$. By our assumption that no coordinate of any $x\in \bar{P}$ has fixed value, such a $y$ must exist. By definition of $k$-module of $\bar{P}$, $\alpha^y \in \R^k$ exists s.t. $A_\A y_\A = d + D \alpha^y$.
Now assume that $\A$ is not a $k$-module of $\ker(A)$. Then there exists $w \in \ker(A)$ s.t. for all $\alpha \in \R^k$ we have $A_\A w_\A \neq D \alpha$. By our choice of of $y$, there exists an $\eps' > 0$ s.t. $y + \eps' w \in P$. We conclude that
\begin{align*}
 A_\A (y_\A + \eps' w_\A) &= A_\A y_\A + \eps' A_\A w_\A \neq d + D \alpha^y + D \alpha &\mbox{for all } \alpha \in \R^k. \\
\Rightarrow A_\A (y_\A + \eps' w_\A) &\neq d + D \alpha &\mbox{for all } \alpha \in \R^k.
\end{align*}
 This is a contradiction. \qed
 \end{enumerate}
\end{proof}
We notice that the theorem applies in particular to polyhedra of the form $P=\{x\in \R^\Rxn : Ax = b, x \geq 0\}$.

This result allows us to restrict our attention to $k$-modules of linear vector spaces of the form $\{ x\in \R^\Rxn : Ax=0 \}$, which, as the following theorem states, are tightly related to the notion of separation of the linear matroid defined on the columns of the matrix $A$.
\begin{theorem} \label{thm:separator_gen}
 $\A \subseteq \Rxn$ is a $k$-module of $\ker(A)$ if and only if $\A$ is a $k+1$-separation of the linear matroid $M$ defined by linear independence of the columns of $A$.
\end{theorem}
\begin{proof}
Let $\B$ be the complement of $\A$. For $\A$ a $k$-module of $\ker(A)$ there exists matrix $D$ with $k$ columns such that $A_\A x_\A\in \langle D \rangle$ for all $x\in \ker(A)$. For any $z\in \langle A_\A \rangle \cap \langle A_\B \rangle$ there exist $x'_\A$ and $x''_\B$ such that $A_\A x'=z=A_\B x''_\B$. Hence $x\in \R^\Rxn$ composed of $x_\A =x'_\A$ and $x_\B=-x''_\B$ has $Ax=0$, whence $x\in \ker(A)$.
Therefore, $\langle A_\A \rangle \cap \langle A_\B \rangle \subseteq \langle D \rangle$. This implies that \begin{eqnarray*}
\lambda(\A) &=& \rank(\A) + \rank(\Rxn \setminus \A) - \rank(\Rxn) \\
            &=& \dim \left( \langle A_\A \rangle \right) + \dim \left(\langle A_\B \rangle \right) - \dim \left( \langle A_\A \rangle \cup \langle A_\B \rangle \right) \\
            &=& \dim \left( \langle A_\A \rangle \cap \langle A_\B \rangle \right)\leq k.
\end{eqnarray*}
For the other direction, suppose that $\lambda(\A)\leq k$. Since $\lambda(\A)=\dim(\langle A_\A \rangle \cap \langle A_\B \rangle)$, there exists a matrix $D$ with $k$ columns such that $\langle A_\A \rangle \cap \langle A_\B \rangle = \langle D \rangle$. For any $x\in \ker(A)$ we have $A_\A x_\A+A_\B x_\B=0$ and therefore $A_\A x_\A \in \langle A_\B \rangle$. Hence, $A_\A x_\A\in \langle A_\A \rangle \cap \langle A_\B \rangle$ and there exists $\alpha \in \R^k$ such that $A_\A x_\A=D \alpha$.
\end{proof}

\section{Vertex Enumeration} \label{sec:totalP}
The ingredients from the previous section will be used here to design a total polynomial time vertex enumeration algorithm for polytopes of bounded branch-width. In fact we notice that all results so far hold irrespective of boundedness of the polyhedra and this will continue to be true for most of this section as well.
We will indicate clearly where boundedness will be required.

Remember from the introduction that any face of $P = \{x\in \R^\Rxn : Ax = b, x \geq 0\}$, and hence any vertex of $P$ is fully characterized by its support. We will build these supports of vertices by hierarchically merging disjoint subsets of the columns of $A$, starting from the individual columns. The process is guided by a branch decomposition of the matroid $M$ with elements $\Rxn$, the columns of $A$.

For our purposes we turn the branch decomposition into a hierarchical rooted decomposition with the leaf nodes of the (directed) tree forming a bijection with the single elements (columns of $A$) of the matroid $M$.
To do so, let $T$ be a branch-decomposition of $M$ with width at most $k+1$.
Choose an arbitrary edge $e = (a,b)$ of $T$. Subdivide $e$ and make the newly created vertex $r$ the root of a directed binary tree $T'$, with children $a,b$. Direct all edges away from $r$.
For each node $a$ of $T'$
we define the set $\A(a) := \{i \in \Rxn: i \mbox{ is leaf under }a\}$.
Thus, $T'$ describes a binary rooted tree with in the root $\A(r)=\Rxn$.
By definition of branch-decomposition, we observe for each node $a$ of $T'$ that $\A(a)$ is a $k+1$-separation of $M$.

As a result of this rooting operation we obtain a rooted branch decomposition of width $k+1$ for polyhedron $\bar P$, which gives us a hierarchical family of $k$-modules $\Mod =: \{\A(a) : a \mbox{ node of }T'\}$  satisfying the following properties, which are immediate, but serve the exposition of our enumeration algorithm.
\begin{enumerate}[label=(P\arabic*)]
 \item \label{branch_cond1} For each $\A \in \Mod$, $\A \neq \Rxn$ there exists exactly one $\B \in \Mod$ with $\A \dot\cup \B \in \Mod$.
 \item \label{branch_cond2} For each $\C \in \Mod, |\C| \geq 2$ there exists exactly one pair $\A,\B \in \Mod$ with $\A \dot\cup \B = \C$.
\end{enumerate}

From here on we assume that we are given the hierarchical (rooted) decomposition and the resulting nested family $\Mod$ of $k$-modules. For each module $\A \in \Mod$ let $D(\A) \in \R^{\Met \times k}$ denote the variable interface and $d(\A) \in \R^\Met$ the constant interface and define
\begin{align}
     P(\A) := \{x \in \R^\A : A_\A x = D(\A) \alpha + d(\A), x \geq 0, \exists \alpha \in \R^k \}.
\end{align}
To get a feeling for what we are aiming at, we notice that if $\A$ would happen to be a $0$-module then the vertices of $P$ would be given simply by the cartesian product of the vertices of $P(\A)$ and the vertices $P(\Rxn\setminus \A)$ Reimers et al.~\cite{reimers15}.

For $k$-modules if $k\neq 0$ this does not hold.
Therefore we will build the set of vertices by combining faces of the polyhedra $P(\A)$ for subsets $\A$ following the hierarchical structure of $\Mod$. Recall the combinatorial Definition~\ref{def:face} of a face and the  subsequent Observation~\ref{obs:face}. For obvious reasons we are only interested here in the the faces of $P(\A)$ that are restrictions of faces of $P$ to $\A$, i.e. faces of $P(\A)$ that can be extended to faces of $P$.

Therefore we define the notion of $\A$-face as the restriction of a face of $P$ to $\A$.
\begin{definition}[$\A$-face]
 For $\A \subseteq \Rxn$ a set $F \subseteq \A$ is called a \emph{$\A$-face} if there exists a $x \in P$ with $x_F > 0$ and $x_{\A \setminus F} = 0$.
\end{definition}
We remark that it follows immediately from the definition of $k$-module (resp. $P(\A)$) that every $\A$-face is a face of $P(\A)$. We also notice that $P(\A)$ may have more faces, but they are uninteresting for us.

In fact, for vertex enumeration we should only be interested in those faces of $P(\A)$ that are restrictions to $\A$ of vertices of $P$. :
\begin{definition}[vertex induced $\A$-face]
 For $\A \subseteq \Rxn$ a set $F \subseteq \A$ is called \emph{vertex induced $\A$-face} if there exists a vertex $x$ of $P$ with $x_F > 0$ and $x_{\A \setminus F} = 0$.
\end{definition}
We note for $\A \subseteq \Rxn$ that, whereas testing if a subset $F$ defines a $\A$-face can be done easily by linear programming (as we will show in Proposition~\ref{prop:check_min_feas}), testing if $F \subseteq \A$ is a vertex induced $\A$-face is NP-hard, Fukuda et al.~\cite{fukuda97}. That it remains hard even for polytopes of bounded branch-width we show in Appendix \ref{appendix:findVertex}.
As a relaxation of vertex induced $\A$-face, we propose the notion of an injective $\A$-face, which can obviously be tested in polynomial time:
\begin{definition}[injective $\A$-face]
  A $\A$-face $F$ is an \emph{injective $\A$-face} if $A_F$ is injective.
\end{definition}
Let us show that injective $\A$-faces are indeed a relaxation of vertex induced $\A$-faces.
%
\begin{proposition} \label{claim2}
  Every vertex induced $\A$-face $F$ for $\A \subseteq \Rxn$ is an injective $\A$-face.
\begin{proof}
Let $F$ be a vertex induced $\A$-face, i.e. there exists a vertex $v \in P$ with $\supp(v_\A) = F$.
It follows immediately from the definition that $F$ is a $\A$-face.

Since $v$ is a vertex it follows that $A_{\supp(v)} x = b$ has $x = v$ as its unique solution and therefore, $A_{\supp(v)}$ is injective. It follows that $A_F$ is also injective and therefore $F$ is an injective $\A$-face.
\end{proof}
\end{proposition}
The crucial property that makes this relaxation in fact eventually exact is that vertex induced $\Rxn$-faces and injective $\Rxn$-faces coincide, as will be a corollary of the following theorem.
\begin{theorem} \label{thm:min_feas_vertex_feas}
 Let $\A \in \Mod$ be a $0$-module. Then, $F$ is an injective $\A$-face if and only if it is a vertex induced $\A$-face.
\begin{proof}
Proposition~\ref{claim2} shows that every vertex induced $\A$-face is an injective $\A$-face.

Let $F$ be an injective $\A$-face. Since $\A$ is a $0$-module it holds for all $x \in P(\A)$ that $A_\A x_\A = d$. It follows by definition of injective $\A$-face that $x \in P(\A)$ with $x_{\A\setminus F} = 0$ is unique. Since $F$ is a $\A$-face, $x$ exists and satisfies $x_F > 0$.

We observe that by its definition $P$ is pointed. Since $F$ is a $\A$-face, we have that $\hat P := \{x \in P : x_{\A\setminus F} = 0\}$ is a non-empty pointed polyhedron and a face of $P$. Hence, $\hat P$ has at least one vertex, $y$ say. Clearly, $y$ is also a vertex of $P$. Since $y\in \hat P$ and $\hat P$ is a face of $P$, $y$ satisfies $y_\A \in P(\A)$ and $y_{\A\setminus F} = 0$. Thus, $y_\A = x_\A$ and hence $F$ is a vertex induced $\A$-face.
\end{proof}
\end{theorem}
\begin{corollary}
  The collection of all injective $\Rxn$-faces is the collection of all vertices of $P$.
\begin{proof}
  Noticing that a vertex induced $\Rxn$-face defines exactly a vertex of $P$ and every vertex of $P$ corresponds to a vertex induced $\Rxn$-face (see Proposition~\ref{prop:vertex_definition}). Then the corollary follows from Theorem~\ref{thm:min_feas_vertex_feas} since $\Rxn$ is a $0$-module.
\end{proof}
\end{corollary}

We will show in the following theorem that Algorithm~\ref{alg:decomposition2} enumerates all injective $\C$-faces for a given $k$-module $\C\in \Mod$, by recursively enumerating all injective $\A$- and $\B$-faces for the two $k$-modules $\A,\B\in \Mod$ that constitute $\C$; i.e., $\C=\A\dot\cup \B$.

\begin{algorithm}
 \begin{algorithmic}
\State function $\mathcal{F}$ = getInjectiveFaces($\C$)
\If{$|\C| = 1$}
  \State $\mathcal{F} := \emptyset$.
  \If{$\emptyset$ is a $\C$-face}
		\State $\mathcal{F} := \mathcal{F} \cup \{\emptyset\}$.
  \EndIf
  \If{$\C$ is an injective $\C$-face}
		\State $\mathcal{F} := \mathcal{F} \cup \{ \C \}$.
  \EndIf
\Else
\State Let $\A,\B \in \Mod$ with $\C = \A \dot\cup \B$.
\State $\mathcal{F}^\A := $ getInjectiveFaces($\A$)
\State $\mathcal{F}^\B := $ getInjectiveFaces($\B$)
\State $\mathcal{F} := \{F^\A \cup F^\B : F^\A \in \mathcal{F}^\A, F^\B \in \mathcal{F}^\B \}$.
\For{$F \in \mathcal{F}$}
	\If{$F$ is not a $\C$-face or $A_F$ is not injective}
		\State $\mathcal{F} := \mathcal{F} \setminus \{F\}$.
	\EndIf
\EndFor
\EndIf
 \end{algorithmic}
 \caption{Algorithm to compute all injective $\C$-faces for $\C \in \Mod$. For $\C = \Rxn$, this algorithm will output all vertices.}
 \label{alg:decomposition2}
\end{algorithm}

\begin{theorem} \label{thm:correctness_alg_decomp2}
 Algorithm~\ref{alg:decomposition2} computes all injective $\C$-faces for a given $\C \in \Mod$.
\begin{proof}
 First notice that the only two possible faces of a $k$-module $\C$ with $|\C| = 1$ are $\emptyset$, and $\C$ itself.
 Any $k$-module $\C\in \Mod$ with $|\C|\geq 2$ is constituted by two disjoint $k$-modules $\A,\B \in \Mod$: $\C =\A \dot\cup \B$.
Now notice that for every injective $\C$-face $F^\C$, we have that $F^\A := F^\C \cap \A$ and $F^\B := F^\C \cap \B$
are, respectively, a $\A$-face and a $\B$-face. Moreover, also $F^\A$ and $F^\B$ are injective, since $A_{F^\C}$ being injective implies that also $A_{F^\A}$ and $A_{F^\B}$ are injective. Clearly, $F^\C = F^\A \cup F^\B$.
Since Algorithm~\ref{alg:decomposition2} tests for every possible pair consisting of an injective $\A$-face and an injective $\B$-face if their union is injective for $\C$, this implies the theorem for any set $\C\in \Mod$ with $|\C|\geq 2$.
\end{proof}
\end{theorem}

We emphasize that until here all results hold equally for bounded and unbounded polyhedra. Now we will show that in case $P$ is a polytope (i.e. $P$ bounded) the existence of a set $\Mod$ of $k$-modules makes Algorithm~\ref{alg:decomposition2} run in polynomial time, for fixed $k$ (bounded branch-width). Some observations will allow us to bound the number of injective faces and eventually obtain the runtime bound. The following proposition still also holds for unbounded polyhedra:
\begin{proposition}\label{prop:dimension_min_face}
 For every injective $\A$-face $F$, $\A \in \Mod$, we have that the corresponding face of $P(\A)$, i.e. $Q = \{x \in P(\A) : x_{\A\setminus F} = 0\}$, has dimension $\dim Q \leq k$.
\begin{proof}
Since $\A$ is a $k$-module, it follows that $A_\A$ maps every point in $Q$ into a $k$-dimensional space. If $\dim Q > k$, it would follow that $A_\A$ is not injective on $Q$.
\end{proof}
\end{proposition}
The following corollary may later give some further intuition for the final complexity bound.
\begin{corollary} \label{cor:claim2}
 Every vertex induced $\A$-face $F$, $\A \in \Mod$, satisfies $\dim \{x \in P(\A) : x_{A \setminus F} = 0\} \leq k$.
\begin{proof}
 Directly from Propositions~\ref{claim2} and \ref{prop:dimension_min_face}. 
\end{proof}
\end{corollary}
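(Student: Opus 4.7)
The plan is essentially to observe that the corollary is a direct chain of the two immediately preceding results and to write out that chain explicitly. First, I would take an arbitrary vertex feasible $A$-face $F$ for $A \in \Mod$. By Proposition~\ref{claim2} this $F$ is automatically a minimal $A$-face, since that proposition shows inclusion of the vertex feasible $A$-faces into the minimal $A$-faces.

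Once I have that $F$ is minimal, I would simply apply Proposition~\ref{prop:dimension_min_face} to $F$, which yields
\begin{equation*}
\dim \{x \in P^A : x_F = 0\} \leq k.
\end{equation*}
No additional bookkeeping is needed because the set $f$ appearing in Proposition~\ref{prop:dimension_min_face} is literally $\{x \in P^A : x_F = 0\}$, and the definitions of $P^A$ and of $A$-face used here coincide with those in the preceding propositions.

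There is essentially no obstacle: the content sits in Propositions~\ref{claim2} and \ref{prop:dimension_min_face}, and this corollary is just their composition. The only thing worth double-checking before writing the one-line proof is that both prior statements apply to the same family of sets $A \in \Mod$ (they do, since both are stated for $A \in \Mod$) and that the notation $P^A$ used in the corollary is consistent with the one fixed earlier via the variable interface $D^A$ and constant interface $d^A$. Intuitively, the bound is tight in spirit because $S_A$ factors through the $k$-dimensional affine image $d^A + \mathrm{span}(D^A)$, so the fiber over any point of that image inside $P^A$ has at most $k$ degrees of freedom, and injectivity on a minimal face forces the whole face to embed into that $k$-dimensional image.
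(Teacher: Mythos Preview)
Your proposal is correct and matches the paper's own proof exactly: the paper simply states ``Directly from Prop.~\ref{claim2} and Prop.~\ref{prop:dimension_min_face},'' which is precisely the two-step chain you spell out (vertex feasible $\Rightarrow$ minimal by Prop.~\ref{claim2}, then $\dim f \leq k$ by Prop.~\ref{prop:dimension_min_face}). Your additional consistency checks and intuition are fine but not needed.
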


The following is a kind of Carath\'{e}odory's Theorem, see e.g. Schrijver~\cite{Schrijver86}, that will allow us to bound the number of injective $\A$-faces in terms of the number of vertex induced $\A$-faces. Here we need for the first time boundedness of $P$. Also we need some extra notation: for a set $X$ we write $|X|$ for its cardinality and $\conv(X)$ for its convex hull. By $\pr_\A $ we denote the projection of vectors in $P$ to their coordinates belonging to $\A$.
\begin{lemma} \label{claim4a}
Assume $P$ is bounded.
  Suppose for $\A \in \Mod$ that $F$ is a $\A$-face and let $h = \dim Q$ with
  $Q= \{x \in P(\A) : x_{\A\setminus F} = 0\}$.
  Then there exist a set of $\ell \leq h+1$ vertex induced $\A$-faces $F^1, \ldots, F^{\ell}$ such that $F = F^1 \cup F^2 \cup \ldots \cup F^{\ell}$.
\begin{proof}
 Since $F$ is a $\A$-face, there exists a $y \in P$ with $y_{\A\setminus F} = 0$ and $y_F > 0$. Therefore, $y$ lies on the face $\{x \in P: x_{\A\setminus F} = 0\}$ of $P$. Since $P$ is bounded, there exists a subset $V'\subset \mathcal{V}$ of the vertices of $P$ such that $y \in \conv(V')$ and $w_{\A\setminus F} = 0$ for all $w \in V'$.

 It follows that $y_\A \in \conv(\pr_\A V')$. Obviously, $\pr_\A V'\subseteq P(\A)$. Since $\dim \{x \in P(\A) : x_{\A\setminus F} = 0\} = h$,
there exist, by Carath\'{e}odory's theorem \cite{Schrijver86}, $\ell \leq h+1$ points $w^1_\A, \ldots, w^\ell_\A \in \pr_\A V'$, the projection of $\ell$ vertices of $P$ in $V'$, with $y_\A \in \conv(w^1_\A, \ldots, w^\ell_\A)$.

Clearly, $F^i = \{j \in \A : j\in \supp(w^i) \}$ is a vertex induced $\A$-face and $F \supseteq F^i$ for all $i=1,\ldots, \ell$.
Reversely, for every $j \in \A$ with $j \in \A \setminus F^i$ for all $i=1, \ldots, \ell$ it follows that $y_j = 0$, since $y_\A \in \conv(w^1_\A, \ldots, w^\ell_\A)$, and hence, $F \subseteq F^1 \cup F^2 \cup \ldots \cup F^\ell$. Thus, $F = F^1 \cup F^2 \cup \ldots \cup F^{\ell}$, which proves the lemma.
\end{proof}
\end{lemma}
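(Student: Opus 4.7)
The plan is to use a feasibility witness $y \in P$ for $F$, exploit boundedness of $P$ to write $y$ as a convex combination of vertices of the face $G := \{x \in P : x_F = 0\}$, then project onto $A$ and apply Carath\'{e}odory's theorem inside the low-dimensional set $f$ to trim the number of vertices used down to at most $h+1$. The desired vertex feasible $A$-faces are then read off as the zero-coordinate supports of the chosen vertices.

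Concretely, pick $y \in P$ with $y_F = 0$ and $y_{A \setminus F} > 0$. Because $P$ is a polytope, $G$ is a polytope whose vertices are vertices of $P$, so $y = \sum_{i=1}^m \mu_i u^i$ for vertices $u^1, \ldots, u^m$ of $P$ lying in $G$, with $\mu_i \geq 0$ and $\sum_i \mu_i = 1$. Each $u^i_A \in P^A$ satisfies $(u^i_A)_F = 0$, hence $u^i_A \in f$, and therefore $y_A \in \conv\{u^1_A, \ldots, u^m_A\} \subseteq f$. Since $\dim f = h$, Carath\'{e}odory's theorem yields (after relabelling) coefficients $\lambda_1, \ldots, \lambda_\ell > 0$ summing to $1$, with $\ell \leq h+1$, such that $y_A = \sum_{i=1}^\ell \lambda_i u^i_A$.

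Define $F^i := \{j \in A : u^i_j = 0\}$ for $i = 1, \ldots, \ell$. Each $F^i$ is vertex feasible, certified by the vertex $u^i$ of $P$; and since $u^i \in G$ we have $F \subseteq F^i$, so $F \subseteq \bigcap_{i=1}^\ell F^i$. For the reverse inclusion, take $j \in A \setminus F$; then $y_j > 0$, so in $y_j = \sum_i \lambda_i u^i_j$ with every $\lambda_i > 0$ and every $u^i_j \geq 0$, at least one summand is strictly positive, placing $j \notin F^i$ for that index. Hence $F = \bigcap_{i=1}^\ell F^i$ with $\ell \leq h+1$. The main subtlety is obtaining the bound $h+1$ rather than something controlled by $\dim P$: Carath\'{e}odory must be applied to $\conv(\pr_A\{u^i\}) \subseteq f$, exploiting that the projection to $A$ collapses into a space of dimension at most $h$; applying Carath\'{e}odory to the vertex expansion of $y$ inside $P$ itself would only yield a bound depending on the ambient dimension of the polytope.
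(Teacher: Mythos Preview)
Your proof is correct and follows essentially the same route as the paper: take a feasibility witness $y$, write it as a convex combination of vertices of the face $\{x\in P:x_F=0\}$, project to $A$, apply Carath\'eodory inside $f$ to cut down to $\ell\le h+1$ points, and set $F^i$ to be the zero-coordinate set on $A$ of each chosen vertex. Your treatment is slightly tidier in that you explicitly take all Carath\'eodory coefficients $\lambda_i>0$ before arguing the reverse inclusion, whereas the paper phrases that step contrapositively; otherwise the arguments coincide.
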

\begin{proposition} \label{claim4}
If $P$ is bounded then for all $\A \in \Mod$
\begin{align*}
 \vert \{F \subseteq \A : F \mbox{ injective $\A$-face}\} \vert \leq  \vert \{F \subseteq \A : F \mbox{ vertex induced $\A$-face} \} \vert^{k+1}.
\end{align*}
\begin{proof}
By Proposition~\ref{prop:dimension_min_face} every injective $\A$-face $F$ gives rise to a face of $P(\A)$ of dimension at most $k$.
Hence, by Lemma~\ref{claim4a}, the union of a set of at most $k+1$ vertex induced $\A$-faces defines a unique $\A$-face $F$. In particular this is true for every injective $\A$-face. Denoting by $c_\mathrm{vert}$ the number of vertex induced $\A$-faces, there are at most
\begin{align*}
 \sum_{i=1}^{k+1} \binom{c_\mathrm{vert}}{i}
\end{align*}
non-empty subsets of at most $k+1$ elements.
For $c_\mathrm{vert} = 1$, we have $\sum_{i=1}^{k+1} \binom{c_\mathrm{vert}}{i} = 1 = c_\mathrm{vert}^{k+1}$ and for $c_\mathrm{vert} \geq 2$, we can derive an upper bound
\begin{align*}
 \sum_{i=1}^{k+1} \binom{c_\mathrm{vert}}{i} \leq \sum_{i=1}^{k+1} \frac{c_\mathrm{vert}^i}{i!} \leq \sum_{i=1}^{k+1} \frac{2^{k+1-i} c_\mathrm{vert}^i}{k+1} \leq (k+1) \frac{c_\mathrm{vert}^{k+1}}{k+1} = c_\mathrm{vert}^{k+1},
\end{align*}
since $2^{k+1-i} \geq \frac{k+1}{i!}$ for all $i \leq k \in \mathbb{N}$.

By injectivity it follows that this is also a bound on the number of injective $\A$-faces.
\end{proof}
\end{proposition}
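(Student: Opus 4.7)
The plan is to build an injection from minimal feasible $A$-faces into the collection of non-empty subsets of vertex feasible $A$-faces of size at most $k+1$, and then bound the number of such subsets by $c_{\mathrm{vert}}^{k+1}$, where $c_{\mathrm{vert}}$ denotes the number of vertex feasible $A$-faces.

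For the injection, I would proceed as follows. Fix a minimal feasible $A$-face $F$, and set $f = \{x \in P^A : x_F = 0\}$. By Proposition~\ref{prop:dimension_min_face}, $\dim f \leq k$, so $F$ is in particular a feasible $A$-face of dimension at most $k$. Lemma~\ref{claim4a} then applies and yields a collection of at most $k+1$ vertex feasible $A$-faces $F^1,\ldots,F^\ell$ (with $\ell\leq k+1$) such that $F = F^1 \cap \cdots \cap F^\ell$. Send $F$ to the set $\{F^1,\ldots,F^\ell\}$. This map is injective because a finite intersection determines $F$ uniquely from the chosen family. Thus the number of minimal feasible $A$-faces is at most the number of non-empty subsets of vertex feasible $A$-faces of size at most $k+1$.

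The counting step then amounts to bounding $\sum_{i=1}^{k+1}\binom{c_{\mathrm{vert}}}{i}$ by $c_{\mathrm{vert}}^{k+1}$. The $c_{\mathrm{vert}} = 0$ and $c_{\mathrm{vert}} = 1$ cases are immediate (the sum is $0$ or $1$). For $c_{\mathrm{vert}} \geq 2$, the natural route is $\binom{c_{\mathrm{vert}}}{i} \leq c_{\mathrm{vert}}^i / i!$ together with an elementary inequality such as $2^{k+1-i}/(k+1) \geq 1/i!$ (which holds for $1 \leq i \leq k+1$), giving $\sum_{i=1}^{k+1} c_{\mathrm{vert}}^i/i! \leq c_{\mathrm{vert}}^{k+1}$ after summing the geometric-type series and using $c_{\mathrm{vert}} \geq 2$.

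The hard part is really not logical but bookkeeping: we must be sure that the $\ell$ vertex feasible faces in the Carath\'eodory-style decomposition of Lemma~\ref{claim4a} are distinct (so that $\{F^1,\ldots,F^\ell\}$ genuinely has at most $k+1$ elements) and that the map $F \mapsto \{F^1,\ldots,F^\ell\}$ is well-defined and injective. Injectivity follows because $F$ is recovered as the intersection of the image, so two different minimal faces cannot produce the same subset. With these observations in place, the proposition follows by combining the injection with the counting bound.
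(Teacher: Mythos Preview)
Your proposal is correct and follows essentially the same approach as the paper: invoke Proposition~\ref{prop:dimension_min_face} for the dimension bound, then Lemma~\ref{claim4a} to represent each minimal feasible $A$-face as an intersection of at most $k+1$ vertex feasible $A$-faces, obtain an injection into subsets of size at most $k+1$, and finish with the same combinatorial estimate $\sum_{i=1}^{k+1}\binom{c_{\mathrm{vert}}}{i}\leq c_{\mathrm{vert}}^{k+1}$. Your extra remarks on injectivity (recovering $F$ as the intersection) and the trivial $c_{\mathrm{vert}}=0$ case are welcome clarifications; the worry about distinctness of the $F^i$ is harmless since passing to the set $\{F^1,\ldots,F^\ell\}$ can only decrease its size.
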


For $\A \in \Mod$ we observe that
\begin{eqnarray}
\label{eq:vertex2}
\vert \{F \subseteq \A : F \mbox{ vertex induced $\A$-face} \} \vert \leq \vert \{x \in \R^\Rxn : x \mbox{ is a vertex of $P$}\} \vert,
\end{eqnarray}
which completes the ingredients that bring us to our crucial theorem.
\begin{theorem} \label{thm:merge_time}
Assume $P$ is bounded.
 Let $\A,\B,\C \in \Mod$ with $\C = \A \dot\cup \B$. Assume the set of injective $\A$-faces, denoted $\mathcal{F}^\A$ and the set of injective $\B$-faces, denoted $\mathcal{F}^\B$, are given. Then the set of injective $\C$-faces, denoted $\mathcal{F}^\C$, can be computed in time
\begin{align*}
 O\left(|\mathcal{V}|^{2k+2} t\right),
\end{align*}
where $t$ is the time needed to check if a subset of $\C$ is an injective $\C$-face.
\begin{proof}
 As argued in the proof of Theorem~\ref{thm:correctness_alg_decomp2} every injective $\C$-face can be obtained from a combination of an injective $\A$-face and an injective  $\B$-face. Hence, we need to consider at most $|\mathcal{F}^\A| \cdot |\mathcal{F}^\B|$ combinations.
Proposition~\ref{claim4} together with (\ref{eq:vertex2}) yields
\begin{align*}
 |\mathcal{F}^\A| \cdot |\mathcal{F}^\B| \leq |\mathcal{V}|^{k+1} \cdot |\mathcal{V}|^{k+1} \leq |\mathcal{V}|^{2k+2}.
\end{align*}
For each combination checking in time $t$ if it defines a face and if it is injective, leads to the bound on the running time.
\end{proof}
\end{theorem}

We still need to argue that $t$ in the above theorem is polynomial time.
\begin{proposition} \label{prop:check_min_feas}
 Given $\A \in \Mod$, it can be checked in input polynomial time if $F\subset \A$ is an injective $\A$-face.
\begin{proof}
Injectivity of $A_F$ is obviously checked in polynomial time. For testing if $F$ defines a face we just solve the following LP:
\begin{align*}
 \max z\; & \\
\mbox{s.t. } A x &= b \\
x_{\A\setminus F} &= 0 \\
x_j &\geq z &\forall j \in F\\
x &\geq 0
\end{align*}
If the LP is unbounded or the optimal value is greater than $0$ then we have found a solution $x^* \in P$ with $x^*_F > 0$ and $x^*_{\A \setminus F} = 0$, which proves that $F$ is a $\A$-face. Reversely, if $F$ is a $\A$-face, there exists a solution of the LP with $z > 0$ and hence, the optimal value of the LP has to be positive.
\end{proof}
\end{proposition}

This brings us to our main theorem.
\begin{theorem} \label{thm:main_result}
Given a branch decomposition of width $k$ of polytope $P$, for constant $k$, Algorithm~\ref{alg:decomposition2} runs in total polynomial time
 $O \left( |\Rxn| |\mathcal{V}|^{2k} t \right)$,
where $t$ is the time needed to check if a subset of $\C$ is an injective $\C$-face.
\begin{proof}
 As mentioned before, the branch decomposition can be represented by a binary tree of $(k-1)$-modules $\Mod$, rooted at $\Rxn$, with leaves the single element modules.
 We observe that the time spend for the leaves (modules $\A \in \Mod$ with $|\A| = 1$) is $O(|\Rxn| t)$, where $t$ is the time needed to check the face property and for injectivity.
 Let $\mathcal{C}$ be the set of interior nodes of the binary tree, then the total time needed for determining the
 injective faces of all modules corresponding to interior nodes is, by Theorem~\ref{thm:merge_time},
\begin{align*}
 O \left( |\mathcal{C}| |\mathcal{V}|^{2k} t \right).
\end{align*}
By Proposition~\ref{prop:check_min_feas} $t$ grows polynomially in input size. Since the number of internal nodes $|\mathcal{C}|$ of a binary tree is bounded by the number of leaves $|\Rxn|$, the result follows.
\end{proof}
\end{theorem}

\section{Conclusion}
\label{sec:conclusion}

In the quest for efficient enumeration of vertices of polytopes, we presented a result that is to the best of our knowledge the first substantial step since the hardness proof for general polyhedra by Khachiyan et al. \cite{khachiyan08} in 2008. We translated the property of branch-width for matroids to polyhedra, leading to the notion of $k$-module.
We showed a strong connection between $k$-modules and $(k+1)$-separators in matroid theory. This way we were able to extend the concept of a decomposition into flux modules in M\"uller and Bockmayr~\cite{mueller13} to a decomposition into $k$-modules by using branch-decompositions.

A branch-decomposition of the columns of the coefficient matrix defining the polytope was turned into a hierarchical family of such $k$-modules, which on its turn allowed for a divide and conquer type method to enumerate all vertices in total polynomial time, polynomial in both the size of the input and the output. The running time is
$O \left( |\Rxn| |\mathcal{V}|^{2k} t \right)$. It remains open if vertex enumeration is fixed parameter tractable (FPT) in $k$.

If the vertices of polytopes in general can be enumerated in total polynomial time remains an intriguing open problem.

We notice that the algorithm works also for unbounded polyhedra. Just the run time bound does not hold. The boundedness condition is just required for using Caratheodory's Theorem. However, it may very well be that the vertex enumeration problem remains hard under bounded branch-width for general polyhedra. Hence, the complexity of this enumeration problem remains open.

We repeat here that the restriction to non-constant variables is not crucial. Variables with constant values in the polytope are easy to detect and then they can be projected out. However, the algorithm would also work in the presence of such constant variables, but if not preprocessed, they may hurt the quality of the branch decomposition.
Some of our theorems related to the equivalence between $k+1$-separation and $k$-module allowed $P$ to be just a subset of a convex set. Notice that in this case, depending on the features of $P$, it may not always be easy to detect variables that are constant in $P$.

We notice that, since it is a divide and conquer algorithm, our enumeration algorithm it highly parallelizable. This may turn out to be a crucial aspect for future vertex enumeration algorithms for metabolic network analysis. However, otherwise the result in this paper is mostly of theoretical importance. Small branch-width decompositions are notoriously hard to find, Oum and Seymour~\cite{oum06}, Ma et al.~\cite{ma13}.

Our $k$-modules, presented as a tool to decompose polyhedra given by inequalities, may be a useful approach for other open questions about polytopes. For example, does it allow for bounding the diameter of decomposable polytopes by a polynomial function of dimension and number of constraints of the polytope? A question that is in the center of attention in polyhedral research ever since Hirsch posed his conjecture. It has recently been proven false by Santos \cite{santos12}, but not dramatically false. Finally, we wonder if it allows a strongly polynomial time algorithm for linear programming over polytopes with bounded branch-width. The existence of a strongly polynomial time algorithm for linear programming is a long standing and important open question in operations research and computational complexity.

\section{Acknowledgements}
The work of Arne Reimers was supported by a PhD-scholarship of the Berlin Mathematical School at Freie Universit\"at Berlin and has been carried out during the tenure of an ERCIM Alain BensoussanFellowship Programme at Centrum Wiskunde \& Informatica, Amsterdam. The work of Leen Stougie was partially supported by the Einstein Foundation while visiting the Technische Universit\"at Berlin.

\bibliographystyle{plain}
\bibliography{modules}

\appendix
\section{Formulation of the polyhedron} \label{appendix:formulation_polyhedron}
Our vertex enumeration results can be extended to polyhedra of the form $Q = \{x \in \R^\Rxn : A x \leq b\}$ for $A \in \R^{\Met \times \Rxn}$. We observe that $Q$ might not be pointed. However, we can efficiently detect if a polyhedron is pointed and the vertex enumeration problem becomes trivial if it is not pointed. Therefore, we assume in the following that $Q$ is pointed. In this case we can transform $Q$ in the following way:
\begin{enumerate}
 \item We introduce slack variables $s$ and obtain the polyhedron $Q' = \{(x,s) \in \R^\Rxn \times \R^\Met : Ax + s = b, s \geq 0\}$. We observe that $\pr_x Q' = Q$, where $\pr_x$ denotes the projection on the variables $x$. 
 \item Let $\X$ be a left null-space matrix of $A$, i.e. $\X z = 0$ if and only if $z \in \langle A \rangle$. It follows that $\pr_s Q' = \{s \in \R^\Met : \X s = \X b, s \geq 0\}$, which we define as $P$. 
\end{enumerate}

\begin{proposition}
 There exists a bijective mapping between the vertices (rays) of $Q$ and the vertices (rays) of $P$.
\begin{proof}
Since $Q$ is pointed, it follows that $A$ is injective. Since $\X$ is a null-space matrix of $A$, it follows that there exists for every $s \in P$ an $x$ such that $Ax + s = b$.
Thus, the map $f : Q \to P$ with $f(x) = b - Ax$ is a bijective linear map. The result follows.
\end{proof}
\end{proposition}

Thus, we can apply our results also to polyhedra $Q$, which requires computing a branch-decomposition of the columns of the matrix $\X$ (or equivalently, through matroid duality, of the rows of the matrix $A$) instead of the columns of the matrix $A$.
%
%

\section{Deciding if a face is vertex induced is NP-hard even with bounded branch-width} \label{appendix:findVertex}
To prove Theorem~\ref{thm:efm_lambda} below, that deciding if a $\A$-face $F$ is vertex induced is NP-hard for polytopes, even under bounded branch-width, we prove it first for unbounded polyhedra. Then we use the fact that any vertex of $P$ is of polynomial size in the outer description of $P$ (see Schrijver~\cite{Schrijver86}) to bound the polyhedron to a polytope.

For $k$-modules $\A$ that only contain one element the problem of deciding if $\A$ is a vertex induced $\A$-face is equivalent to deciding if a vertex exists with a given variable (element of $\A$) in its support. We refer to this problem as {\sc Find Vertex with Single Support}. We notice that it cannot be solved by linear programming, because the polyhedron may be unbounded.

{\sc Find Vertex with Single Support} (FVSS): \\
{\em Instance}: Polyhedron $P = \{x \in \R^\Rxn : Ax = b, x \geq 0\}$ and one variable $e \in \Rxn$. \\
{\em Question}: Does there exist a vertex $x$ of $P$ with $x_e > 0$?

It was shown in Acu\~na et al.~\cite{Acuna2010210} that this problem is NP-hard in general. Here we show that FVSS remains hard in case of bounded branch-width.

\begin{theorem} \label{thm:efm_lambda}
 Problem FVSS is \cNP-hard for polyhedra $P$ of branch-width $3$.
\end{theorem}
Our proof is by reduction from SCIPF, which was already used in Cunningham and Geelen~\cite{cunningham07} to show hardness for general IP with bounded branch-width.

{\sc Single Constraint Integer Programming Feasibility} (SCIPF): \\
{\em Instance}: A non-negative vector $a \in \mathbb{Z}^n$ and an integer $b$. \\
{\em Question:} Does there exist $x \in \mathbb{Z}^n$ satisfying $ax = b, x \in \{0,1\}^n$?


Given an instance of SCIPF we create a polyhedron $P$ with a designated variable $x_0$ such that $P$ contains a vertex with $x_0 > 0$ if and only if there exists an $x \in \{0,1\}^n$ with $a x = b$.
Consider the polyhedron $P$ defined by the feasible solutions $(x,x_0,y,z)$ of the following inequalities:
\begin{align}
 ax + x_0 - (b+1) z &= 0 \label{eq:P1}\\
\Eins y + \Eins x - nz &= 0  \label{eq:P2}\\
x_i - z &\leq 0 &\mbox{for } i \in \{1, \ldots, n\} \label{eq:P3}\\
y_i - z &\leq 0 &\mbox{for } i \in \{1, \ldots, n\} \label{eq:P4}\\
2z - x_0 &= 1  \label{eq:P5}\\
x_i &\geq 0 &\mbox{for } i \in \{1, \ldots, n\} \label{eq:P6}\\
y_i &\geq 0 &\mbox{for } i \in \{1, \ldots, n\} \label{eq:P7}\\
x_0 &\geq 0  \label{eq:P8}\\
z &\geq 0 \label{eq:P9},
\end{align}
where $\Eins$ denotes the all-1 vector.

We will show that solutions $x \in \{0,1\}^n$ with
$ax = b$ correspond to vertices of $P$ with $x_0 > 0$ (Theorem~\ref{thm:efm_reduction}) and that $P$ has bounded branch-width  (Theorem~\ref{thm:efm_branch_width}), which allows us to conclude that FVSS is \cNP-hard on polyhedra of bounded branch-width (Theorem~\ref{thm:efm_lambda}).

\begin{lemma} \label{lemma:x_boolean}
 Let $(x,x_0,y,z)$ be a vertex of $P$ with $x_0 > 0$.
It then holds for all $i \in \{1, \ldots, n\}$ that $x_i \in \{0,z\}$.
\begin{proof}
Assume the lemma is false. Then there exists a smallest counterexample, i.e., a vertex with the largest number of inequality constraints being satisfied with equality. Choose $j \in \arg\max\{a_i : 0 < x_i < z\}$. We distinguish two cases.
\begin{itemize}
 \item[Case $1$:] \emph{There exists a $k \neq j$ with $0 < x_k < z$.} Define $y' = y$ and $x' \in \R^n$ by
\begin{align*}
 x'_i &= x_i &\mbox{ for } k \neq i \neq j \\
 x'_j &= x_j + \min(z-x_j, x_k) \\
 x'_k &= x_k - \min(z-x_j, x_k)
\end{align*}
Because $a_j \geq a_k$, it follows that
\begin{align}
 ax' + x_0 &= ax + (a_j - a_k) \min(z-x_j, x_k) + x_0 \geq ax + x_0 = (b+1)z  \label{eq:x_boolean_case1}\\
 \Eins y + \Eins x' &= nz. \notag
\end{align}
Hence $(x', x_0, y', z)$ satisfies all constraints of $P$ except \eqref{eq:P1}. Moreover, (besides \eqref{eq:P1}) a strictly larger set of inequality constraints is satisfied by equality: $x'_j=z$ or $x'_k=0$.
\item[Case $2$:] \emph{There exists no $k \neq j$ with $0 < x_k < z$.} Thus $\Eins x$ is not a multiple of $z$ and therefore, also $\Eins y$ is not a multiple of $z$. It follows that there exists a $k$ with $0 < y_k < z$. Define $x', y' \in \R^n$ by:
\begin{align*}
 x'_i &= x_i &\mbox{ for } i \neq j \\
 x'_j &= x_j + \min(z-x_j, y_k) \\
 y'_i &= y_i &\mbox{ for } i \neq k \\
 y'_k &= y_k - \min(z-x_j, y_k)
\end{align*}
Because $a_j \geq 0$, it follows that
\begin{align}
 ax' + x_0 &= ax + a_j \min(z-x_j, y_k) + x_0 \geq ax + x_0 = (b+1)z \label{eq:x_boolean_case2}\\
 \Eins y + \Eins x' &= nz. \notag
\end{align}
Hence $(x', x_0, y', z)$ satisfies all constraints of $P$ except \eqref{eq:P1}. Also in this case (besides \eqref{eq:P1}) a strictly larger set of inequality constraints is satisfied by equality.
\end{itemize}
In both cases, we found $(x', x_0, y', z)$ that satisfies all constraints of $P$ except \eqref{eq:P1}. Additionally, a strictly larger set of inequality constraints is satisfied by equality (excluding \eqref{eq:P1}) than by $(x, x_0, y, z)$. Hence, if also \eqref{eq:P1} were satisfied with equality by $(x', x_0, y', z)$ we would have a counterexample to $(x, x_0, y, z)$ being a vertex with largest number of inequality constraints being met with equality and the lemma is proved.

Therefore, assume $ax' + x_0 > (b+1)z$. For $\theta \geq 0$ we now define:
\begin{align*}
 x''(\theta) &:= \theta x' \\
 y''(\theta) &:= \theta y' \\
 x''_0(\theta) &:= 2 \theta z - 1 \\ 
 z''(\theta) &:= \theta z.
\end{align*}
Observe that $x''_0(1) = x_0$.

Also observe that all constraints that do not involve $x_0$ are homogenous and hence are also satisfied by $(x''(\theta),\allowbreak x''_0(\theta),\allowbreak y''(\theta), z''(\theta))$ for all $\theta \geq 0$. Moreover, if an inequality constraint was satisfied by equality with $(x',x_0,y',z)$ it will also be satisfied by equality with $(x''(\theta), x''_0(\theta), y''(\theta), z''(\theta))$. Constraint \eqref{eq:P5} is also satisfied by construction. Because of constraint \eqref{eq:P8} we are allowed to choose $\theta$ only from the range $\theta \geq \frac{1}{2z}$.

Suppose that there exists a $\theta\geq \frac{1}{2z}$ such that $a x''(\theta) + x''_0(\theta) - (b+1) z''(\theta) = 0$. Then $(x''(\theta), x''_0(\theta), y''(\theta), z''(\theta))$ is a feasible solution and satisfies more inequality constraints by equality than $(x,x_0, y, z)$, a contradiction.

Thus we only need to consider the case where $a x''(\theta) + x''_0(\theta) - (b+1) z''(\theta) \neq 0$ for all $\theta \geq \frac{1}{2z}$. Since $a x''(\theta) + x''_0(\theta) - (b+1) z''(\theta)$ is continuous in $\theta$ and $a x''(1) + x''_0(1) - (b+1) z''(1) \geq 0$ by \eqref{eq:x_boolean_case1} and \eqref{eq:x_boolean_case2} it follows that $a x''(\theta) + x''_0(\theta) - (b+1) z''(\theta) > 0$ for all $\theta \geq \frac{1}{2z}$.
In particular, for $\theta= \frac{1}{2z}$ we have $a x''(\frac{1}{2z}) + x''_0(\frac{1}{2z}) - (b+1) z''(\frac{1}{2z}) > 0$. Notice that $x''_0(\frac{1}{2z})=0$, whence $a x''(\frac{1}{2z}) - (b+1) z''(\frac{1}{2z}) > 0$.
Since $a x + x_0 - (b+1) z = 0$ and $x_0 > 0$ we have $a x - (b+1) z < 0$.
Thus, there exists a $\nu \in [0,1]$ such that for
\begin{align*}
x''' &:= \nu x + (1-\nu) x''(\frac{1}{2z}) \\
x_0''' &:= 0 \\
y''' &:= \nu y + (1-\nu) y''(\frac{1}{2z}) \\
z''' &:= \nu z + (1-\nu) z''(\frac{1}{2z})
\end{align*}
we have
$$ a x''' + x_0''' - (b+1) z''' = 0 $$

It follows that $(x''', x_0''', y''', z''')$ satisfies all constraints of $P$ except constraint \eqref{eq:P5}.  However, since $z''' \geq \frac{1}{2}$, using the homogeneity arguments as before, it follows for $\sigma = \frac{1}{2z'''}$ that $(\sigma x''', \sigma x_0''', \sigma y''', \sigma z''') \in P$. We observe that $(\sigma x''', \sigma x_0''', \sigma y''', \sigma z''')$ satisfies more inequality constraints by equality than $(x,x_0,y,z)$, again a contradiction.
\end{proof}
\end{lemma}

\begin{theorem} \label{thm:efm_reduction}
 $P$ has a vertex with $x_0 > 0$ if and only if there exists a solution of $a x = b$ with $x \in \{0,1\}^n$.
\begin{proof}
 \begin{itemize}
  \item[$\Leftarrow$:] Let $x \in \{0,1\}^n$ with $a x = b$. Define:
\begin{align*}
 y_i &= \begin{cases}
         0 & \mbox{if }i \leq \Eins x \\
				 1 & \mbox{if }i > \Eins x
        \end{cases} &\mbox{for } i \in \{1, \ldots, n\}\\
 x_0 &= 1 \\
 z &= 1
\end{align*}
Note that $\Eins x$ just counts the number of non-zero entries in $x$.

Clearly the inequality constraints are satisfied and we also have:
\begin{align*}
 ax + x_0 - (b+1) z &= b + 1 - (b+1) \cdot 1 = 0 \quad \eqref{eq:P1}\\
\Eins y + \Eins x - nz &= n - \Eins x + \Eins x - n = 0 \quad \eqref{eq:P2} \\
2z -x_0 &= 2 - 1 = 1 \quad \eqref{eq:P5}
\end{align*}
Thus $x,x_0,y,z \in P$.
A little though should make it clear that amongst the set of constraints that are tight in $(x,x_0,y,z)$ $2n+2$ of them are linearly independent, whence $(x,x_0,y,z)$ is a vertex of $P$.

\item[$\Rightarrow$:] Let $(x,x_0,y,z)$ be a vertex of $P$ with $x_0 > 0$.
By Lemma~\ref{lemma:x_boolean} we know that $x_i \in \{0, z\}$ for all $i$. Since $x_0 = 2z -1$ we have
\begin{align*}
 0 < \frac{x_0}{z} < 2.
\end{align*}
Hence it follows from $ax + x_0 = (b+1)z$ that:
\begin{align*}
 \sum_{i : x_i > 0} a_i &= \frac{ax}{z} < \frac{ax}{z} + \frac{x_0}{z} = b+1 \\
 \sum_{i : x_i > 0} a_i &= \frac{ax}{z} > \frac{ax}{z} + \frac{x_0}{z}-2 = b-1.
\end{align*}
Since $a,b$ are integer, it follows that:
\begin{align*}
 \sum_{i : x_i > 0} a_i &= b.
\end{align*}
We thus have a solution of $ax = b$ with $x \in \{0,1\}^n$.
 \end{itemize}
\end{proof}
\end{theorem}

As a last step, it remains to show that the coefficient matrix of $P$ has bounded branch-width.
We therefore assume that for the inequalities \eqref{eq:P3} and \eqref{eq:P4}, we introduce slack-variables $x^s, y^s$ to also turn them into equality constraints.

\begin{theorem} \label{thm:efm_branch_width}
 The coefficient matrix of $P$ has a branch-width of at most $3$.
\begin{proof}
 Let $A$ be the coefficient matrix of $P$ without the column corresponding to the variable $z$. To build the branch-decomposition for $A$, we first build the subsets $X_i := \{x_i, x_i^s\} \in \Mod$ and $Y_i := \{y_i, y_i^s\} \in \Mod$ for all $i \in \{1, \ldots, n\}$.
We observe that for each $i \in \{1, \ldots, n\}$ the variable $x_i^s$ (resp. $y_i^s$) only appears in row \eqref{eq:P3} (resp. \eqref{eq:P4}) and in this row the only other variable is $x_i$ (resp. $y_i$), because we ignore the column for the variable $z$.
Hence, it suffices to track in which constraints the variables $x,y$ appear. This is only the case in rows \eqref{eq:P1} and \eqref{eq:P2}. The variable $x_0$ appears alone in row \eqref{eq:P5} and hence, this row can be ignored as well. 

Hence, we can combine the subsets $X_i, Y_i$ for $i \in \{1, \ldots, n\}$ and $x_0$ in arbitrary order to form the family $\Mod$ of subsets. It is easy to see that any such combination of subsets of columns has
connectivity function value at most $2$. Hence, we obtain a branch-decomposition of branch-width at most $2$ for the coefficient matrix $A$ without the $z$-column.

To obtain a branch-decomposition for the whole coefficient matrix of $P$, we add the column of $z$ at an arbitrary place in the branch-decomposition. Thus, we obtain a branch-decomposition of branch-width at most $3$.
\end{proof}
\end{theorem}

\begin{proof}[Theorem~\ref{thm:efm_lambda}]
 We can reduce SCIPF to finding a vertex $x$ of a polyhedron $P$ with $x_i > 0$ for a given variable $x_i$ by Theorem~\ref{thm:efm_reduction}. By Theorem~\ref{thm:efm_branch_width}, we know that $P$ has branch-width at most $3$.
 Thus, we can solve SCIPF by solving FVSS on a polyhedron of branch-width at most $3$.
\end{proof}

Using results from Schrijver~\cite{Schrijver86} (Chapter 10) we know that all vertices of $P$ in the proof above have a size bounded by the size of the input to describe $P$ in its outer form (given above). Therefore, we can construct a bounded polyhedron $Q := \{(x,y,z,w) : z + w = \Gamma, (x,y,z) \in P, w \geq 0\}$, where $\Gamma$ is sufficiently large (but with binary encoding size polynomially bounded by the input size) such that $z < \Gamma$ for every vertex $(x,y,z)$ of $P$.

We observe that $Q$ is bounded and $(x, y, z, w)$ is a vertex of $Q$ with $x_0 > 0$ and $w > 0$ (i.e. $F = \{x_0, w\}$ is a vertex induced $F$-face of $Q$) if and only if $(x,y,z)$ is a vertex of $P$ with $x_0 > 0$. We observe that for this construction the branch-width does not go up and we get:

\begin{theorem} \label{thm:hardness_polytopes}
 Given a polytope $Q$ and a $K$-face $F$ it is \cNP-hard to decide if $F$ is vertex induced even if $Q$ has branch-width at most $3$ and $|K| = 2$.
\end{theorem}

\end{document}